\algrenewcommand\algorithmicindent{1.0em}
\algnewcommand{\LineComment}[1]{\State \(\triangleright\) #1}
\newtheorem{mydef}{Definition}[section]
\newtheorem{mythm}{Theorem}[section]
\title{DP-BART for Privatized Text Rewriting under Local Differential Privacy}
\author{Timour Igamberdiev \and Ivan Habernal	\\
	Trustworthy Human Language Technologies \\
	Department of Computer Science \\
	Technical University of Darmstadt \\
	\texttt{\{timour.igamberdiev, ivan.habernal\}@tu-darmstadt.de}\\
	\url{www.trusthlt.org}
}
\begin{document}
    \onecolumn
    \noindent \textbf{DP-BART for Privatized Text Rewriting under Local Differential Privacy}

    \medskip
    \noindent Timour Igamberdiev and Ivan Habernal

    \bigskip
    This is a \textbf{camera-ready version} of the article accepted for publication at the \emph{Findings of the Association for Computational Linguistics: ACL 2023}. The final official version will be published on the ACL Anthology website later in 2023: \url{https://aclanthology.org/}

    \medskip
    Please cite this pre-print version as follows.
    \medskip

\begin{verbatim}
@InProceedings{Igamberdiev.2023.ACL,
    title = {DP-BART for Privatized Text Rewriting
             under Local Differential Privacy},
    author = {Igamberdiev, Timour and Habernal, Ivan},
    publisher = {Association for Computational Linguistics},
    booktitle = {Findings of the Association for 
                 Computational Linguistics: ACL 2023},
    pages = {(to appear)},
    year = {2023},
    address = {Toronto, Canada}
}
\end{verbatim}
    \twocolumn

\maketitle
\begin{abstract}
Privatized text rewriting with local differential privacy (LDP) is a recent approach that enables sharing of sensitive textual documents while formally guaranteeing privacy protection to individuals.
However, existing systems face several issues, such as formal mathematical flaws, unrealistic privacy guarantees, privatization of only individual words, as well as a lack of transparency and reproducibility.
In this paper, we propose a new system `DP-BART' that largely outperforms existing LDP systems.
Our approach uses a novel clipping method, iterative pruning, and further training of internal representations which drastically reduces the amount of noise required for DP guarantees.
We run experiments on five textual datasets of varying sizes, rewriting them at different privacy guarantees and evaluating the rewritten texts on downstream text classification tasks.
Finally, we thoroughly discuss the privatized text rewriting approach and its limitations, including the problem of the strict text adjacency constraint in the LDP paradigm that leads to the high noise requirement.\footnote{Our code is available at \url{https://github.com/trusthlt/dp-bart-private-rewriting}.}
\end{abstract}

\section{Introduction}

\setstcolor{blue}

Protection of privacy is increasingly gaining attention in today's world, both among the general public and within the fields of machine learning and NLP.
One very common methodology for applying privacy to an algorithm is Differential Privacy (DP) \citep{Dwork.Roth.2013}.
In simple terms, DP provides a formal guarantee that any individual's contribution to a query applied on a dataset is bounded.
In other words, no individual can influence this query `too much'.

One particular method of applying DP to the domain of NLP is \textit{differentially private text rewriting}, in which an entire document is rewritten with DP guarantees by perturbing the original text representations.
For instance, given a document ``I would like to fly from Denver to Los Angeles this Thursday'', the system may rewrite it as ``Show me flights to cities in California this week''. If one is training a model on intent classification for airline travel inquiry systems, either document would be a useful data point.
In this way, we avoid using the original text that has uniquely identifiable qualities of a specific author, and instead create a privatized `synthetic' example.
This is in fact a form of local differential privacy (LDP), which is a stronger form of DP that is not limited to a specific dataset.

The benefits of an LDP text rewriting system are immense, where the output privatized dataset can be used for any downstream analysis.
We also avoid the problem of having to manually determine what specific tokens in a document are private, applying LDP to the entire document.
However, there is a significant difficulty in creating such a system, with a lot of perturbation required to achieve any reasonable privacy guarantees, leading to poor downstream utility.
In addition, there are several issues in existing DP text rewriting systems,
such as formal flaws having been discovered in their methodology \citep{habernal2021when}, older types of models used (e.g. single-layer LSTM, as in \citet{krishna2021adept}), 
high privacy budgets, 
as well as a lack of transparency in the claimed privacy guarantees, outlined in \citet{igamberdiev2022dp}.

To address these issues, we propose \textbf{DP-BART}, a DP text rewriting system under the local DP paradigm that improves upon existing baselines and consists of several techniques that can be directly applied to a pre-trained BART model \citep{lewis2020bart}, without having to design and train such a model from scratch.
Despite being a large transformer architecture, it can be easily used for data privatization, not requiring many resources.
Our methodology consists of a novel clipping method for the BART model's internal encoder representations, as well as a pruning and additional training mechanism that reduces the amount of DP noise that needs to be added to the data during the privatization process.

We summarize our contributions as follows.
First, we present our \textbf{DP-BART} model and its related methodologies, aimed at reducing DP noise and reaching a better privacy/utility trade-off.
For comparison, we use a reimplementation of the current primary baseline for this task, the ADePT model.
Second, we run experiments to investigate the privacy/utility trade-off of these models, using five unique datasets that gradually increase in size, evaluating rewritten texts on downstream text classification tasks.
Finally, we thoroughly examine the feasibility of the LDP text rewriting setting, investigating issues of the high noise requirement due to the strict text adjacency constraint, trade-offs between privacy and dataset size, what exactly is the object of privatization, required computational resources, as well as limitations of the approach as a whole and possible alternatives.

\section{Related Work}

We present a theoretical background on differential privacy, the BART model, and pruning for neural networks in Appendix~\ref{sec:appx-background}.

Applying differential privacy to neural network training and model publishing has converged to using a mainstream method, namely DP-SGD \citep{abadi2016deep}.
However, the task of text privatization is still broadly unexplored, with many unanswered questions remaining, such as dealing with the unstructured nature of text and explainability of the privacy guarantees provided to textual data \citep{klymenko-etal-2022-differential}.
\citet{mattern2022differentially} explored text rewriting with global differential privacy, sampling from a generative language model trained with DP.

There are only a few approaches that directly tackle the problem of differentially private text rewriting with LDP.
\citet{krishna2021adept} developed the ADePT system, which is an RNN-based text autoencoder that incorporates DP noise to its encoder output hidden state.
As described by \citet{habernal2021when}, ADePT had a formal error in calculating the Laplace noise scale, which resulted in it violating differential privacy.

A more recent text rewriting system is DP-VAE \citep{Weggenmann.et.al.2022.WWW}, which added constraints to the vanilla VAE model latent space \citep{kingma2014auto} to obtain a bounded sensitivity on its mean and variance parameters. Despite the high difficulties of the task, the paper reports surprisingly high performance for high privacy standards.
Since their experimental description lacks some key details and the code base is not public, we cannot reproduce their approach.

In addition, there are a number of word-level DP systems \citep{feyisetan2019leveraging,xu-etal-2020-differentially,bo2021er}, where individual word embeddings are perturbed with DP, with new words then sampled close to these privatized vectors.
As \citet{Mattern.et.al.2022.arXiv} point out, there are several shortcomings of such approaches, including a lack of obfuscating syntactic information and the inability to provide proper anonymization.
In essence, these methods do not privatize a full utterance, but only single words.

\section{Methods}
\label{sec:methods}

We outline this section as follows. First, we briefly describe the baseline method we use, being a modified version of the ADePT system by \citet{krishna2021adept}.
Next, we investigate two main issues with applying a local DP system such as ADePT to a transformer model, namely extreme sensitivity and computational infeasibility, described in Sections~\ref{sec:ldp-transformers-extreme-sensitivity} and \ref{sec:high-computational-demand}, respectively.

We then demonstrate several novel mechanisms which tackle these issues and provide numerous benefits in the privacy/utility trade-off for the local DP setting.
Section~\ref{sec:dp-bart-clv} describes the clipping by value module, with an additional analysis on determining optimal settings for it provided in Appendix~\ref{sec:appx-c-value-selection}.
Sections~\ref{sec:dp-bart-pr} and \ref{sec:dp-bart-pr-plus} then describe the neuron-based pruning methods which significantly reduce the amount of noise that needs to be added to the model for a given privacy budget and increase model robustness to noise through further noisy training. Low-level specifics on the pruning methods are further provided in Appendix~\ref{sec:appx-pruning}.

\subsection{Baseline (ADePT)}

ADePT starts out with a standard autoencoder architecture. Given an input document $x$, an encoder function $\textsc{Enc}$ calculates a latent vector representation $z$. This representation is then sent to a decoder function $\textsc{Dec}$, which reconstructs the original text $\hat{y}$.
ADePT uses a single-layer, unidirectional LSTM for both the encoder and decoder.
\begin{equation}
    z = \textsc{Enc}(x) \quad \text{and} \quad \hat{y} = \textsc{Dec}(z)
    \label{eqn:basic-autoencoder}
\end{equation}

To incorporate differential privacy into this model, the unbounded latent vector $z \in \mathbb{R}^{n}$ (where $n$ is the size of the autoencoder's hidden dimension) is bounded by its norm and the clipping constant $C \in \mathbb{R}$.
Laplace or Gaussian noise ($\eta$) is then added to the resulting vector, from which the decoder reconstructs the original sequence, $\hat{y}$. For comparison with our primary methodologies below, we refer to this as the clipping by norm module, outlined in equation~\ref{eqn:cln-module}.
\begin{equation}
    z' = z \cdot \min \left(1, \frac{C}{||z||_2} \right) + \eta
    \label{eqn:cln-module}
\end{equation}

In our experiments, we make three adjustments to this system. First, we fix a theoretical issue in the sensitivity calculation for equation~\ref{eqn:cln-module}, outlined in \citet{habernal2021when}. Instead of using the sensitivity of $2C$ for the Laplace noise scale, outlined in Theorem 1 of \citet{krishna2021adept}, we instead use the corrected sensitivity of $2C\sqrt{n}$ from Theorem 5.1 of \citet{habernal2021when}.
Second, the `classical' Gaussian mechanism guarantees privacy only for $\varepsilon < 1$ \citep[p.~262]{Dwork.Roth.2013}. We therefore utilize the Analytic Gaussian mechanism \citep{balle2018improving} instead, which allows us to use $\varepsilon \geq 1$.
Finally, we fix an issue with the pre-training procedure of the model.
In \citet{krishna2021adept}, ADePT was pre-trained on the downstream datasets with clipping, but without the added DP noise from equation~\ref{eqn:cln-module}. \citet{igamberdiev2022dp} demonstrated that this results in significant memorization by the model of the input documents, even after adding DP noise during the rewriting process. In order to remedy this, we therefore pre-train the autoencoder model on a public corpus, unrelated to the downstream datasets.

\subsection{Applying LDP to Transformers}
\label{sec:applying-ldp}

There are two main issues in applying a transformer model to a local DP setting similar to ADePT, outlined below.

\subsubsection{Using LDP in pre-trained transformers suffers from extreme sensitivity}
\label{sec:ldp-transformers-extreme-sensitivity}
First, we need a significantly larger amount of noise to be added to the model, due to the increased size of the encoder output vector. Due to the cross-attention mechanism typical of transformer models, the full output vector for the BART encoder is of size $d_{tok} \times l$, where $d_{tok}$ is the hidden size for a particular token, while $l$ is the sequence length. 
For the smaller \texttt{bart-base} model, 
using a short sequence length of $20$, this results in a dimensionality of $768 \times 20 = 15360$. 
In comparison, ADePT's encoder output vector dimensionality is only $1024$ in our configuration.

\subsubsection{High requirement of computational resources for pre-training}
\label{sec:high-computational-demand}

We experimented with clipping by norm for BART, similarly to ADePT, but found that it destroys any useful representations of the model (even prior to adding the DP noise).
Additional pre-training of BART that would incorporate clipping by norm turned out to be ineffective.

The remaining option to learn a model with clipping by norm would be to pre-train the model from scratch.
Unlike the small ADePT model, which is a unidirectional, single-layer LSTM, pre-training a BART transformer from scratch is computationally  infeasible on an academic budget.
While the details of BART's computational requirements are not described in \citet{lewis2020bart}, we can estimate this for the relatively small \texttt{bart-base} model of 139M parameters that was released by the original authors,\footnote{\url{https://github.com/facebookresearch/fairseq/tree/main/examples/bart}} by comparison with other similar-sized models.
For instance, the BERT model \citep{devlin-etal-2019-bert}, with less parameters (110M for \texttt{bert-base}), was pre-trained for 4 days on up to 16 TPUs, as described on the authors' Github repository.\footnote{\url{https://github.com/google-research/bert}}

\subsection{DP-BART-CLV (Clipping by Value)}
\label{sec:dp-bart-clv}

\begin{figure}[h]
    \centering
    \includegraphics[width=\linewidth]{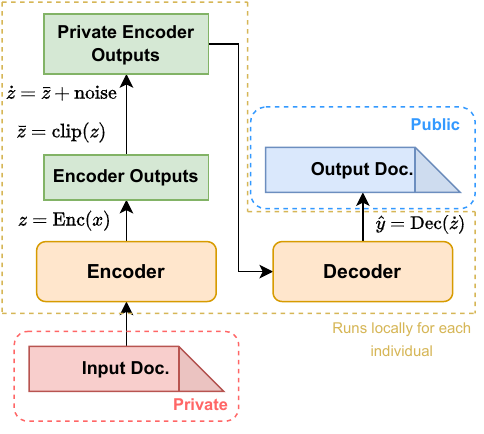}
    \caption{DP-BART-CLV}
    \label{fig:dp-bart-clv}
\end{figure}

To address the issues with clipping by norm, we developed the \textbf{DP-BART-CLV} model, shown in Figure~\ref{fig:dp-bart-clv}.
We analyzed the internal representations of a pre-trained BART model's encoder output vector values, using a public dataset. We found that these are mostly bounded within a couple of standard deviations from their mean. We present this analysis in detail in Appendix~\ref{sec:appx-c-value-selection}.

To avoid significantly altering these representations, we can therefore use clipping by value (CLV), as in equation~\ref{eqn:clv-clipping-per-dim}.
\begin{equation}
    \bar{z}_i = \min(\max(z_i, C_{min}), C_{max})
    \label{eqn:clv-clipping-per-dim}
\end{equation}
for any dimension $i$ in the encoder output vector $z$, a set minimum threshold $C_{min}$ and maximum threshold $C_{max}$.
The bulk of values centered around the mean of $z$ are thus left the same, without being rescaled as in equation~\ref{eqn:cln-module}.
Since these values were also found to be symmetrically distributed, we modify equation~\ref{eqn:clv-clipping-per-dim} to set $C = C_{max} = -C_{min}$, as in equation~\ref{eqn:clv-clipping-per-dim-modified}.
\begin{equation}
    \bar{z}_i = \min(\max(z_i, -C), C)
    \label{eqn:clv-clipping-per-dim-modified}
\end{equation}

The pipeline for \textbf{DP-BART-CLV} is as follows.
We first initialize a BART model using a pre-trained checkpoint, where pre-training was again done on a public dataset, separate from the downstream datasets that are to be privatized.

For a given document, we put it through the encoder of the model at inference time, obtaining the encoder output vector $z$, as in equation~\ref{eqn:clv-encoder}.
\begin{equation}
    z = \textsc{Enc}(x)
    \label{eqn:clv-encoder}
\end{equation}
where $x$ is the input sequence and $\textsc{Enc}$ is the encoder of the BART model.
While the BART model outputs the encoder's last hidden state as $z \in \mathbb{R}^{l \times d_{tok}}$ for each mini-batch, we flatten this vector to be $z \in \mathbb{R}^n$, where $n = l \cdot d_{tok}$.
Clipping is then performed as in equation~\ref{eqn:clv-clipping-full},
\begin{equation}
    \bar{z} = \textsc{clip}(z)
    \label{eqn:clv-clipping-full}
\end{equation}
where \textsc{clip} is carried out for every dimension of the vector, according to equation~\ref{eqn:clv-clipping-per-dim-modified}.

With this clipping mechanism in place, we can now calculate its sensitivity, in order to determine the scale of noise to add in the DP setting. This is outlined in Theorems~\ref{thm:dp-bart-clv-laplace} and \ref{thm:dp-bart-clv-gaussian} below.
\begin{mythm}
\label{thm:dp-bart-clv-laplace}
Let $f: \mathbb{R}^n \rightarrow \mathbb{R}^n$ be a function as in equation~\ref{eqn:clv-clipping-full}. The $\ell_1$ sensitivity $\Delta_1 f$ of this function is calculated as in equation~\ref{eqn:laplace-clv-sens},
where $C \in \mathbb{R}: C > 0$ is the clipping constant and $n \in \mathbb{N}$ is the dimensionality of the vector.
\end{mythm}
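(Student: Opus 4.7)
The plan is to apply the definition of $\ell_1$ sensitivity directly: $\Delta_1 f = \max_{x,x'} \|f(x) - f(x')\|_1$, where the maximum ranges over all pairs of adjacent inputs. The crucial observation is that, under the LDP text rewriting paradigm used here, any two input documents are treated as adjacent, so $x$ and $x'$ in the supremum are effectively unconstrained vectors in $\mathbb{R}^n$.

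First I would note that, by equation~\ref{eqn:clv-clipping-per-dim-modified}, the output of $f$ in each coordinate is confined to the interval $[-C, C]$. Consequently, for any two inputs $x, x' \in \mathbb{R}^n$ and any coordinate $i$, we have the pointwise bound $|f(x)_i - f(x')_i| \le 2C$. Summing this over the $n$ coordinates yields the upper bound $\|f(x) - f(x')\|_1 \le 2Cn$.

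Next I would argue that this bound is attained, so the inequality is in fact an equality. Choose any $x$ with $x_i \le -C$ for every $i$ (for instance $x_i = -C$) and any $x'$ with $x'_i \ge C$ for every $i$ (for instance $x'_i = C$). Then $f(x)_i = -C$ and $f(x')_i = C$ for all $i$, so $\|f(x) - f(x')\|_1 = 2Cn$. Combining the upper and lower bounds gives the sensitivity expression that equation~\ref{eqn:laplace-clv-sens} should state, namely $\Delta_1 f = 2Cn$.

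There is no real obstacle here beyond being explicit about the adjacency relation: the proof is a two-line consequence of per-coordinate boundedness plus the triangle-free additivity of the $\ell_1$ norm. The only subtlety worth flagging is that the linear-in-$n$ growth of $\Delta_1 f$ is exactly what makes Laplace noise so expensive for BART's high-dimensional encoder output, which motivates the pruning methods in Sections~\ref{sec:dp-bart-pr} and~\ref{sec:dp-bart-pr-plus}; I would mention this connection briefly at the end of the proof so the reader sees why the bound, though elementary, is central to the paper.
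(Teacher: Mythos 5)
Your proof is correct and follows essentially the same route as the paper's: bound each coordinate difference by $2C$ using the fact that $f$ maps into $[-C,C]^n$, then sum over the $n$ coordinates. The only addition is your tightness argument (exhibiting $x_i \le -C$, $x'_i \ge C$ to attain $2Cn$), which the paper omits since an upper bound on sensitivity suffices for the DP guarantee, but which is a harmless and slightly more complete touch.
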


\begin{equation}
    \Delta_1f = 2Cn
    \label{eqn:laplace-clv-sens}
\end{equation}

\begin{proof}
See Appendix~\ref{sec:appx:thm-clv-laplace-proof}.
\end{proof}

\begin{mythm}
\label{thm:dp-bart-clv-gaussian}
Let $f: \mathbb{R}^n \rightarrow \mathbb{R}^n$ be a function as in equation~\ref{eqn:clv-clipping-full}. The $\ell_2$ sensitivity $\Delta_2 f$ of this function is calculated as in equation~\ref{eqn:gaussian-clv-sens}, where $C \in \mathbb{R}: C > 0$ is the clipping constant and $n \in \mathbb{N}$ is the dimensionality of the vector.
\end{mythm}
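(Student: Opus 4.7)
The plan is to compute $\Delta_2 f = \sup_{z, z'} \|f(z) - f(z')\|_2$, where in the LDP text-rewriting setting the supremum ranges over all pairs of inputs in $\mathbb{R}^n$. I would first derive a uniform upper bound by exploiting that \textsc{clip} confines every output to the hypercube $[-C, C]^n$, and then exhibit a pair of inputs that attains this bound, giving tightness.

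For the upper bound, the key observation is that equation~\ref{eqn:clv-clipping-per-dim-modified} ensures $|f(z)_i - f(z')_i| \leq 2C$ coordinate-wise, since each $f(\cdot)_i$ lies in the interval $[-C, C]$ of diameter $2C$. Squaring and summing over the $n$ coordinates and taking a square root gives $\|f(z) - f(z')\|_2 \leq 2C\sqrt{n}$, which mirrors the $\ell_1$ calculation used for Theorem~\ref{thm:dp-bart-clv-laplace} but with the sum-of-absolute-values replaced by the root-of-sum-of-squares.

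For the matching lower bound, I would pick inputs that saturate every coordinate in opposite directions, e.g.\ any $z$ with all $z_i \geq C$ (so $f(z)_i = C$) and any $z'$ with all $z'_i \leq -C$ (so $f(z')_i = -C$). Each coordinate then contributes exactly $(2C)^2$ to the squared distance, yielding $\|f(z) - f(z')\|_2 = 2C\sqrt{n}$, and hence $\Delta_2 f = 2C\sqrt{n}$.

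There is no genuine obstacle here: the argument is a direct consequence of the geometry of the hypercube $[-C, C]^n$ in which the clipped vectors live. The one thing worth flagging explicitly is the adjacency relation being used; the LDP paradigm adopted in this paper treats any two input documents as neighbors, so no restriction is imposed on the pair $(z, z')$ and the worst-case corners of the hypercube are admissible, ensuring the bound is actually attained rather than merely approached.
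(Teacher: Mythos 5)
Your argument is correct and follows essentially the same route as the paper's proof in Appendix~\ref{sec:appx:thm-clv-gaussian-proof}: bound each coordinate difference by the diameter $2C$ of the clipping interval and take the root of the sum of squares to obtain $2C\sqrt{n}$. The only difference is that you additionally exhibit a pair of inputs at opposite corners of the hypercube to show the bound is attained (relying on the LDP adjacency relation that admits any two inputs), a tightness step the paper's proof leaves implicit.
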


\begin{equation}
    \Delta_2f = 2C\sqrt{n}
    \label{eqn:gaussian-clv-sens}
\end{equation}

\begin{proof}
See Appendix~\ref{sec:appx:thm-clv-gaussian-proof}.
\end{proof}

We then add noise to this clipped vector,
as in equation~\ref{eqn:clv-mechanism}.
\begin{equation}
    \dot{z} = \bar{z} + (Y_1, \dots, Y_n)
    \label{eqn:clv-mechanism}
\end{equation}
where each $Y_i$ is drawn i.i.d.\ from $\text{Lap}(\frac{\Delta_1}{\varepsilon})$ for the Laplace mechanism \citep{Dwork.Roth.2013} or $\mathcal{N}(0, (\frac{\alpha \Delta_2}{\sqrt{2\varepsilon}})^2)$ for the Analytic Gaussian mechanism, where $\alpha$ is calculated according to Algorithm 1 of \citet{balle2018improving}.

Decoding is then performed auto-regressively (e.g. using beam search), as usual, using this perturbed $\dot{z}$ encoder output vector, instead of the original $z$ vector, as in equation~\ref{eqn:clv-decoder}.
\begin{equation}
    \hat{y} = \textsc{Dec}(\dot{z})
    \label{eqn:clv-decoder}
\end{equation}
where $\hat{y}$ is the model's output prediction of the reconstructed input sequence $x$.
By standard arguments, the \textbf{DP-BART-CLV} model satisfies $(\varepsilon, 0)$-DP for the Laplace mechanism and $(\varepsilon, \delta)$-DP for the Analytic Gaussian mechanism, as outlined in equation~\ref{eqn:clv-mechanism} \citep{Dwork.Roth.2013,balle2018improving}.

\subsection{DP-BART-PR (Pruning)}
\label{sec:dp-bart-pr}

We develop the \textbf{DP-BART-PR} model in order to address the remaining issue of dimensionality, outlined in Section~\ref{sec:ldp-transformers-extreme-sensitivity}.
The \textbf{DP-BART-CLV} model, while being resource-efficient, still has the issue of a large dimensionality for the encoder output vectors, since in equations~\ref{eqn:laplace-clv-sens} and \ref{eqn:gaussian-clv-sens}, the sensitivity is multiplied by a factor of $n$ and $\sqrt{n}$, respectively, which in turn results in a larger noise scale.

\textbf{DP-BART-PR}, addressing both the resource and dimensionality issues, is an extension to the above \textbf{DP-BART-CLV}, with an additional iterative pruning/training mechanism applied to it.
The procedure is outlined in Figure~\ref{fig:dp-bart-pr} and Algorithm~\ref{alg:dp-bart-pr} of Appendix~\ref{sec:appx-algo}.

\begin{figure}[!ht]
    \centering
    \includegraphics[width=\linewidth]{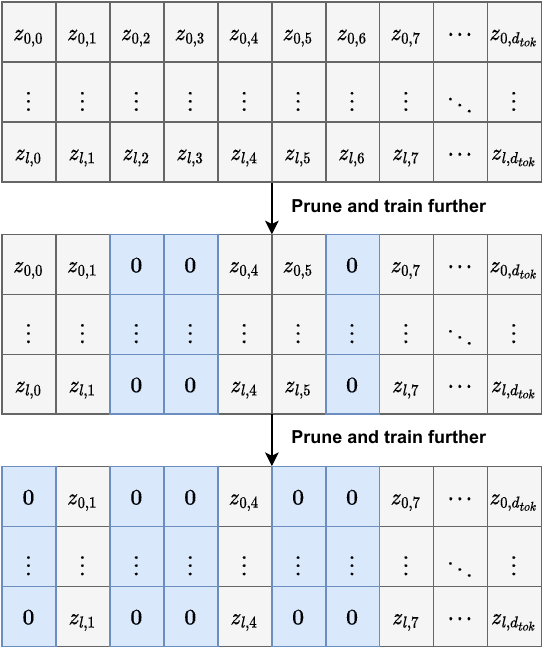}
    \caption{Pruning and re-training procedure for the DP-BART-PR model, illustrated for one document. Each $i^{th}$ neuron from a set of indices is set to $0$ for all tokens of the encoder output vectors $z \in \mathbb{R}^{l \times d_{tok}}$. These neuron indices are the same for any document. This process is repeated iteratively until performance starts to degrade.}
    \label{fig:dp-bart-pr}
\end{figure}

As for \textbf{DP-BART-CLV}, we first load a pre-trained BART model checkpoint. Each input token will have an encoder output representation of dimensionality $d_{tok}$. For every token in the sequence, we prune a certain percentage of these neurons by setting them to 0. Importantly, \textit{these pruned neurons are the same for every single input document}.
The criteria for selecting these pruned neurons is discussed in more detail in Appendix~\ref{sec:appx-pruning}.

Following this pruning step, we train the model for $k$ iterations to compensate for possible lost performance from pruning.
This step is performed on an external public dataset, unrelated to any downstream texts that are to be privatized.
During this process, we also clip each dimension of the BART encoder output vector $z_i$ according to equation~\ref{eqn:clv-clipping-per-dim-modified}, to encourage representations to be constrained within the ranges $-C$ and $C$ to reduce potential negative performance impacts of clipping during the rewriting phase.

We note that only a few data points are necessary for this additional training step, maintaining the low-resource setting, outlined in Appendix~\ref{sec:appx-hyperparameters}.
We then continue this two-step process iteratively, until a desired dimensionality reduction of the encoder output vector is reached.
At the end of this process, the resulting model weights are frozen and the final pruned indices of the encoder output vector $z$ are saved.
The model is then used for text rewriting at inference time, just like in \textbf{DP-BART-CLV}, but with the additional pruning step, using the saved indices.

As a result of this process, we can significantly reduce $n$ in Equations~\ref{eqn:laplace-clv-sens} and \ref{eqn:gaussian-clv-sens}, which in turn reduces the resulting noise scale used in equation~\ref{eqn:clv-mechanism}. With less noise added to the encoder output vectors for any given $\varepsilon$ value, we can thus expect a better privacy/utility trade-off.

This pruning procedure can thus be seen as a \textit{privacy/utility tuning knob}. With more pruning, we reduce the size of $n$, therefore requiring less added noise for a given $\varepsilon$ value in the DP setting. At the same time, more pruning reduces the model's expressivity with less dimensions, which will result in an inevitable performance drop after reaching a certain pruning threshold.
We noticed that pruning a few dimensions (e.g.\ 25\% of neurons) can recover basically all of the performance of the model with some additional training steps, but after a certain point this starts to degrade. The `sweet spot' we found is at approximately 75\% of neurons. Additional discussions on these points can be found in Appendix~\ref{sec:appx-pruning}.
We would like to stress again that these pruning adjustments are made just once and using public data only, after which the final model can be used locally by any individual for their own data privatization.

\subsubsection{Proof that DP-BART-PR is differentially private}

\begin{mythm}
\label{thm:dp-bart-pr-proof}
The \textbf{DP-BART-PR} model, combining Algorithm~\ref{alg:dp-bart-pr} and the above \textbf{DP-BART-CLV} procedure, summarized in equation~\ref{eqn:clv-mechanism}, satisfies $(\varepsilon, 0)$-DP when using the Laplace mechanism and $(\varepsilon, \delta)$-DP when using the Analytic Gaussian mechanism.
\end{mythm}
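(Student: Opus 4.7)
The plan is to reduce the claim to Theorems~\ref{thm:dp-bart-clv-laplace} and~\ref{thm:dp-bart-clv-gaussian} by arguing that (i) the iterative pruning/training phase is a data\nobreakdash-independent preprocessing step that produces a fixed mechanism, (ii) on that fixed mechanism the effective output dimensionality that contributes to sensitivity is strictly smaller than $n$, and (iii) auto\nobreakdash-regressive decoding is post\nobreakdash-processing.

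First I would fix notation for the two phases. Let $\mathcal{A}_{\text{train}}$ denote Algorithm~\ref{alg:dp-bart-pr}, which takes as input only the public corpus (disjoint from the user data to be privatized) and returns a triple $(\textsc{Enc}^\star,\textsc{Dec}^\star,\mathcal{I})$, where $\mathcal{I}\subseteq\{1,\dots,n\}$ is the set of pruned neuron indices applied uniformly across the sequence. Conditioned on any realization of $\mathcal{A}_{\text{train}}$, the rewriting mechanism $\mathcal{M}_{\text{PR}}$ applied to a user document $x$ is: compute $z=\textsc{Enc}^\star(x)$, set $z_i\leftarrow 0$ for $i\in\mathcal{I}$, clip each remaining coordinate to $[-C,C]$ as in equation~\ref{eqn:clv-clipping-per-dim-modified}, add i.i.d.\ noise, and finally apply $\textsc{Dec}^\star$. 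Because $\mathcal{A}_{\text{train}}$ never sees any private input, it suffices to show that $\mathcal{M}_{\text{PR}}$ is DP for every fixed output of $\mathcal{A}_{\text{train}}$; the overall guarantee then follows because DP is closed under conditioning on independent randomness.

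Next I would recompute the sensitivity of the clip\nobreakdash-and\nobreakdash-prune map $\tilde f(x)\in\mathbb{R}^n$ with $\tilde f(x)_i=0$ for $i\in\mathcal{I}$ and $\tilde f(x)_i=\min(\max(\textsc{Enc}^\star(x)_i,-C),C)$ otherwise. For any two inputs $x,x'$, the coordinates indexed by $\mathcal{I}$ contribute $0$ to both $\|\tilde f(x)-\tilde f(x')\|_1$ and $\|\tilde f(x)-\tilde f(x')\|_2$, and the remaining $n'=n-|\mathcal{I}|$ coordinates are each clipped to $[-C,C]$, hence differ by at most $2C$. Replaying the proofs of Theorems~\ref{thm:dp-bart-clv-laplace} and~\ref{thm:dp-bart-clv-gaussian} verbatim on these $n'$ active coordinates yields
\begin{equation}
\Delta_1\tilde f=2Cn',\qquad \Delta_2\tilde f=2C\sqrt{n'}.
\end{equation}
Calibrating the Laplace noise to scale $\Delta_1\tilde f/\varepsilon$ and the Gaussian noise to variance $2\ln(1.25/\delta)(\Delta_2\tilde f)^2/\varepsilon^2$ then yields $(\varepsilon,0)$\nobreakdash-DP and $(\varepsilon,\delta)$\nobreakdash-DP respectively by the standard Laplace and Gaussian mechanism theorems of \citet{Dwork.Roth.2013}.

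Finally I would invoke post\nobreakdash-processing twice: once to absorb $\textsc{Dec}^\star$ (which operates only on the noisy vector $\dot z$ and carries no further access to $x$), and once to discharge the data\nobreakdash-independent randomness of $\mathcal{A}_{\text{train}}$ by a standard conditioning argument. The main conceptual obstacle, and the step I would write most carefully, is exactly (i): making it explicit that because the pruning indices $\mathcal{I}$ and the retrained weights depend only on public data, they may be released in the clear and the sensitivity bound can be stated with $n'$ rather than $n$ without any composition cost. Everything else is a routine specialization of the CLV proofs.
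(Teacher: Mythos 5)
Your proposal is correct and follows essentially the same route as the paper's own proof: the pruning/retraining phase uses only public data and so consumes no privacy budget, the zeroed coordinates are identical across all inputs and hence contribute nothing to the $\ell_1$/$\ell_2$ sensitivity (reducing the effective dimension to $n'$), and the rest reduces to the CLV sensitivity theorems plus post-processing for the decoder. Your version is simply a more carefully formalized writeup of the paper's (rather terse) argument, making explicit the conditioning on the public-data training randomness and the recomputed sensitivities $2Cn'$ and $2C\sqrt{n'}$.
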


\begin{proof}
See Appendix~\ref{sec:appx-thm-pr-proof}.
\end{proof}

\subsection{DP-BART-PR+}
\label{sec:dp-bart-pr-plus}

We further augment the above \textbf{DP-BART-PR} model by incorporating additional training steps with added DP noise. 
This model follows the same procedure for iterative pruning and additional training, as outlined in algorithm~\ref{alg:dp-bart-pr}, but we add further training iterations on the pruned model with added DP noise to the clipped encoder output representations, as in equation~\ref{eqn:clv-mechanism}.
For example, using the Analytic Gaussian mechanism at $\varepsilon=500$, at each iteration we clip the encoder output vectors $z$ from equation~\ref{eqn:clv-encoder} and add the appropriate amount of Gaussian noise based on the sensitivity from equation~\ref{eqn:gaussian-clv-sens}.

The idea behind this additional training is to help the model to better decode from the noisified encoder representations.
As with \textbf{DP-BART-PR}, for \textbf{DP-BART-PR+} we perform these additional training iterations on a public dataset, unrelated to the downstream datasets for privatized text rewriting.
A separate model is prepared for each individual privacy budget $\varepsilon$.

\section{Experiments}

\subsection{Datasets}

We perform experiments on five English-language textual datasets, each gradually increasing in size (Table~\ref{tab:dataset-stats}).
For comparison with \citet{krishna2021adept}, we use ATIS \citep{dahl1994expanding} and Snips \citep{coucke2018snips} as our `small' datasets, with the task of multi-class intent classification. We use the same train/validation/test split as in \citet{goo2018slot}.
For a medium-sized dataset, we use the popular IMDb dataset \citep{maas2011learning}, on the binary classification task of movie review sentiment analysis. For this, as well as the following two datasets, we use a validation partition by randomly selecting 20\% of the training set.

For a large dataset, we use the dataset from \citet{grasser2018aspect}, which is a collection of drug reviews from the website Drugs.com, also with the task of binary sentiment analysis as in \citet{shiju2022}.
This dataset, although publicly available, 
closely simulates a sensitive dataset in need of privacy protection,
with detailed descriptions by users of their medical conditions and experiences with different treatments.

Our final dataset is the much larger Amazon Customer Reviews dataset \citep{he2016ups}, of which we take a 2M subset of reviews from various categories (e.g. electronics, office products), from the full 144M. As with Drugs.com, we modify the original five-star sentiment score to a binary classification task, with four or more stars being the `positive' class, while the rest are `negative'.
We refer to Appendix~\ref{sec:appx-dataset-prep} for more details.

\begin{table}[h]
    \centering
    \begin{tabular}{lr|rr}
		\textbf{Dataset}	& \textbf{Classes}	& \textbf{\texttt{\#} Trn.+Vld.}	&\textbf{\texttt{\#} Test}		\\ \hline
		ATIS	&26	&4,978	&893	\\
		Snips	&7	&13,774	&700		\\
		IMDb	&2	&25,000	&25,000	\\
		Drugs.com	&2 	&161,297 	&53,766 	\\
		Amazon	&2	&1,904,197	&211,605	\\
    \end{tabular}
    \caption{Dataset statistics. Trn.: Train, Vld.: Validation. Size represents number of documents.}
    \label{tab:dataset-stats}
\end{table}

\subsection{Experimental Setup}
\label{sec:exp-setup}

We have three main experimental configurations.
The first is the \textbf{original} setting, where we run experiments on our downstream datasets without any rewriting or DP.
The second configuration is \textbf{rewrite-no-dp}, where we utilize each of the four models outlined in Section~\ref{sec:methods} at $\varepsilon=\infty$ (\textbf{ADePT}, \textbf{DP-BART-CLV}, \textbf{DP-BART-PR}, \textbf{DP-BART-PR+}).
Finally, the third and main configuration is \textbf{rewrite-dp}, where we compare the above four models, this time at various privacy settings 
($\varepsilon \in [10, 2500]$, Laplace and Analytic Gaussian mechanisms).

For \textbf{rewrite-no-dp} and \textbf{rewrite-dp}, our experimental pipeline consists of the following four steps, depending on the specific model used:

\noindent \hangindent=0.7cm \textbf{Pre-training:} The model is pre-trained on a large public corpus. For ADePT, we use 50\% of the Openwebtext corpus \citep{Gokaslan2019OpenWeb}.
For all our BART experiments, we load a pre-trained \texttt{facebook/bart-base} model.\footnote{Available from \url{https://huggingface.co/facebook/bart-base}}

\noindent \hangindent=0.7cm \textbf{Further training:} Only for DP-BART-PR and DP-BART-PR+, again performed using the Openwebtext corpus. It helps the model adjust to pruning and DP noise, respectively (as outlined in Sections~\ref{sec:dp-bart-pr} and \ref{sec:dp-bart-pr-plus}).
More details on the amount of further training in Appendix~\ref{sec:appx-hyperparameters}.

\noindent \hangindent=0.7cm \textbf{Rewriting:} We take a pre-trained model and rewrite one of the downstream datasets.

\noindent \hangindent=0.7cm \textbf{Downstream:} We take the rewritten dataset (training and validation partitions) and run downstream experiments on it using a pre-trained BERT model with a classification head on top. We use the rewritten validation set for hyperparameter optimization (see Appendix~\ref{sec:appx-hyperparameters}) and the original test set for final evaluations. See Appendix~\ref{sec:appx-downstream-setup} for details on the downstream model.

\noindent In the \textbf{original} setting, we use the same downstream model as above, using the original datasets instead of the rewritten ones.

\paragraph{Evaluation}

We perform two types of evaluations for the above experimental settings: intrinsic and extrinsic.
For our extrinsic evaluation we measure the test $F_1$ scores on the downstream task performance. This is the primary utility metric of the rewritten texts, with privacy correspondingly quantified with the $\varepsilon$ value.
We expect that even if a text may be rewritten to look very different from the original input, it could still have enough downstream task-specific information remaining to properly train a model on this task (e.g. the sentiment of a document in the case of sentiment analysis).
This is in fact the `sweet spot' we are looking for, removing identifying elements of the author, but still retaining some key features from the input for good downstream performance.

We also measure BLEU scores \citep{papineni2002bleu} for our intrinsic evaluation, discussed in more detail in Appendix~\ref{sec:appx-intrinsic-eval-and-detailed-res}.

\section{Results}
\label{sec:results}

\begin{figure*}[!ht]
    \centering
    \includegraphics[width=\linewidth]{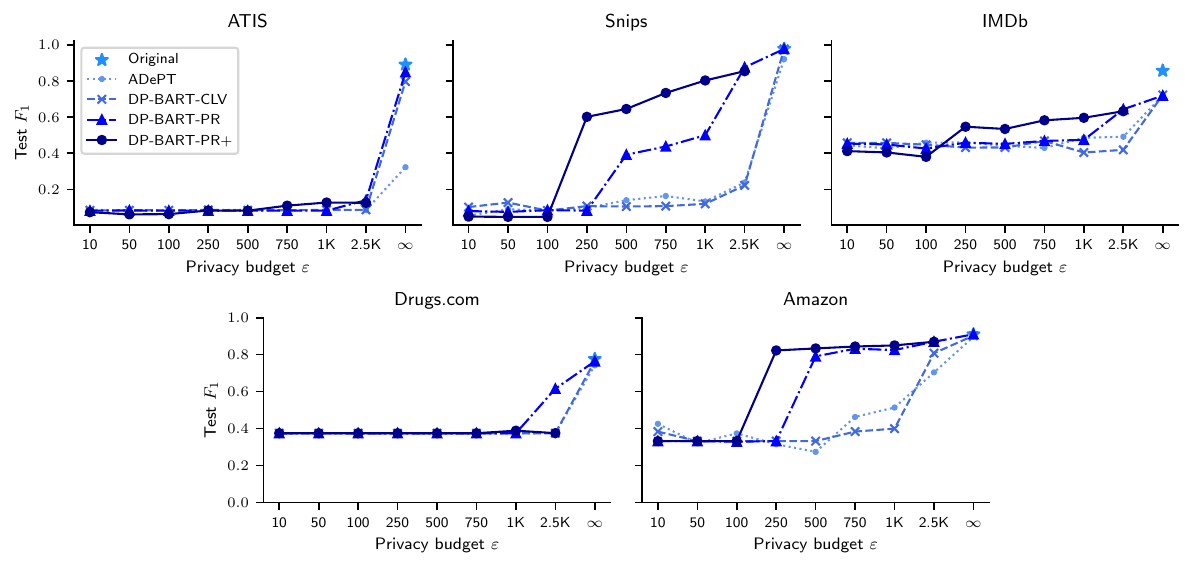}
    \caption{Downstream test $F_1$ results (macro-averaged) for each dataset, using the four model types. Lower $\varepsilon$ corresponds to better privacy. Both \textbf{original} and \textbf{rewrite-no-dp} results can be seen on the right of each graph at $\varepsilon=\infty$. The rest of the results represent the \textbf{rewrite-dp} setting at different $\varepsilon$ values.}
    \label{fig:results}
\end{figure*}

Figure~\ref{fig:results} shows our downstream test $F_1$ results for all datasets, at varying values of $\varepsilon$.
We report results for the Analytic Gaussian mechanism, which nearly always outperformed those of the Laplace mechanism.
We present results in tabular form with mean and standard deviations in Appendix~\ref{sec:appx-intrinsic-eval-and-detailed-res}.
Additionally, we present sample rewritten texts in Appendix~\ref{sec:appx-sample-texts}.
We outline the main patterns as follows.

\paragraph{DP-BART-PR+ performs best}
DP-BART-PR+ reaches the best privacy/utility trade-off for the majority of datasets, having the highest scores at the lower $\varepsilon$ values.
DP-BART-PR results are second-best for most datasets, performing better than DP-BART-CLV and ADePT, which are low for the majority of configurations.
The overall results hierarchy can be clearly seen in the Snips dataset, where at $\varepsilon=500$, DP-BART-PR+ reaches $F_1$ $0.65$, DP-BART-PR at $0.39$, while both DP-BART-CLV and ADePT are below $F_1$ $0.15$.

\paragraph{Original vs. rewritten}
Results for the \textbf{original} setting are generally on-par with those of the \textbf{rewrite-no-dp} setting. For instance, Snips original $F_1$ is $0.98$, and $\varepsilon=\infty$ with rewriting is also at $F_1$ of $0.98$ for DP-BART-PR, being very similar for the other three models.
One exception to this is IMDb, which has a drop from original $F_1$ $0.86$ to $0.72$ for all models.
This can be explained by the fact that the \textbf{original} settings use longer sequence lengths, while both \textbf{rewrite-no-dp} and \textbf{rewrite-dp} settings are limited to a sequence length of $20$.
This is not a problem for datasets such as ATIS and Snips, since their documents are generally very short, mostly limited to brief user inquiries.
For a dataset such as IMDb, however, which consists of detailed reviews by individuals, limiting the sequence length results in a loss of valuable information.

\paragraph{Epsilon vs. dataset size}
Regardless of dataset size, we can see a drop in results for all models as $\varepsilon$ is decreased.
With the models incorporating pruning, this drop appears at later $\varepsilon$ values, such as DP-BART-PR+ on the Amazon dataset moving down from $F_1$ $0.82$ at $\varepsilon=250$ to $F_1$ $0.33$ at $\varepsilon=100$, and DP-BART-PR from $F_1$ $0.79$ at $\varepsilon=500$ to $F_1$ $0.33$ at $\varepsilon=250$.
A similar pattern can be seen for the Snips dataset, despite being far smaller than Amazon, while the Drugs.com dataset shows low results throughout, for all model types.
The smallest dataset, ATIS, also performs poorly, which can be explained by the large number of classes and few data points for learning the task in the noisy setting.
We can generally see that a larger dataset size does not necessarily mean better results at lower $\varepsilon$ values, although the significantly larger Amazon dataset does show the best results.

\section{Discussion}

\paragraph{Reducing noise for text rewriting with LDP}

We have shown that it is possible to reduce the amount of noise in the LDP setting of privatized rewriting, in order to obtain more useful rewritten texts for downstream tasks.
To compare DP-BART-CLV vs. DP-BART-PR, we can examine the resulting $\ell_2$ sensitivity from equation~\ref{eqn:gaussian-clv-sens} ($\Delta_2f = 2C\sqrt{n}$).
Setting sequence length $l=20$ and $C=0.1$, as in our experiments, without pruning we have a dimensionality of $n=768 \cdot 20=15360$, hence $\Delta_2f = 2 \cdot 0.1 \cdot \sqrt{15360} \approx 24.79$. With pruning we are able to remove 76.30\% of those $n$ neurons, with only $n=182 \cdot 20=3640$ remaining. The $\ell_2$ sensitivity thus becomes $\Delta_2f = 2 \cdot 0.1 \cdot \sqrt{3640} \approx 12.07$.

Plugging this into the Analytic Gaussian mechanism's noise scale calculation from \citet{balle2018improving}, with $\delta=10^{-5}$ and $\varepsilon=500$, we have $\sigma^2 = 0.8958$ without pruning and $\sigma^2 = 0.4362$ with pruning.
We can therefore see that, \textbf{with DP-BART-PR, we are able to reduce the noise scale by more than half}.

\paragraph{Pre-training and computational resources}

Ultimately, a very effective way to prepare a model for privatized text rewriting would be to pre-train it from scratch, being fully in control of hyperparameters such as the dimensionality $n$ of the encoder output vectors $z$, which determines the $\ell_1$ and $\ell_2$ sensitivities from equations~\ref{eqn:laplace-clv-sens} and \ref{eqn:gaussian-clv-sens}, respectively.
In addition, the whole model could be pre-trained with added noise and clipping mechanisms, potentially being even more robust than our approach in DP-BART-PR+, where we incorporate further noisy training.
We noticed for DP-BART-PR+ that the lower the $\varepsilon$ value we use, the more additional training iterations the model needs to properly reduce the validation loss.

This demonstrates that, also in the setting of pre-training from scratch, we would need to train for more iterations in order to reach lower $\varepsilon$ values.
This can pose serious challenges, however, for reasons of computational demand discussed in Section~\ref{sec:high-computational-demand}.
DP-BART-PR+ can therefore be seen as a sweet spot approach, where we only need a few additional training iterations and can still achieve a significant dimensionality reduction through pruning, as well as additional robustness to noise.

\paragraph{What is being privatized}

It is very important to be clear on exactly what information is being privatized when performing text rewriting with LDP.
Since we are working with DP at the document level, the entire document is a `data point', hence any choice and combination of words for a given sequence would be a unique identifier.
We thus avoid the problem of having to choose what specific tokens are `private' within the document. This is crucial, since stylistic aspects of an author can be very abstract, with subtle syntactic and vocabulary choices.

Another significant benefit of such an approach, is that we are not limiting ourselves to any specific downstream analysis (e.g.\ sentiment of a document), being \textit{task agnostic}.
However, this also means that, for any given document, \textit{any other document is neighboring}, since we are in the LDP setting.
This leads us to a serious discussion on the limitations of such an approach in Section~\ref{sec:limitations}.

An additional question arises of whether one dataset may have multiple documents associated with one individual.
There are several ways to go about dealing with this.
One standard approach in differential privacy is to linearly scale the $\varepsilon$ parameter.
Thus, if there are $k$ documents associated with a given individual, then a privacy budget of $k\varepsilon$ is accounted in total \citep{Dwork.Roth.2013}.
Another option would be to simply append all texts associated with one individual into a single `document', rewriting this using just a single $\varepsilon$ privacy budget.

\section{Conclusion}

We have proposed DP-BART, a novel methodology for LDP-based privatized text rewriting, which outperforms existing methods. We have demonstrated our method's privacy/utility trade-off, the relations between the privacy budget and dataset size, and discussed limitations of the privatized text rewriting approach as a whole. Future research directions include utilizing large-scale pre-training to potentially reach a better privacy/utility trade-off, as well as investigating domain specific text rewriting for relaxing the strict requirements of the LDP approach.

\section{Limitations}
\label{sec:limitations}

\paragraph{Domain of public training texts}

In preparing the DP-BART models, it is important to take into account the domain of the public data that is used to (1) pre-train the original BART model, and (2) perform additional training iterations (DP-BART-PR and DP-BART-PR+).
This will ultimately have an impact on the model's effectiveness for text privatization, depending on the nature of the downstream texts.
For example, if this training data is restricted to news articles, then there may be limited performance for rewriting texts that are further from this domain, such as internet comments.
Another obvious limitation is the language of the public data. If the model is trained on a monolingual English corpus, then it would not be possible to use it for rewriting texts from other languages.

The public data used for our experiments consists of news, web text, stories and books \citep{lewis2020bart, Gokaslan2019OpenWeb}.
We expect that expanding this to include more data and more varied domains will lead to better performance in a greater diversity of texts and downstream tasks.

\paragraph{LDP for text rewriting}

For every output document, any two inputs, no matter how similar or distinct, are considered neighboring.
If we have a small sequence length of $20$ tokens, with a relatively small vocabulary of $1000$ words, then the total number of possible combinations is $1000^{20}$, which is $10^{60}$!
While we compress these documents into a latent vector with a limited range and dimensionality, the strict adjacency constraints are still present.
We can therefore expect an inevitable utility drop when using more reasonable $\varepsilon$ values (e.g. $\varepsilon=1$).

With more sophisticated architectures, we have shown that it is possible to push this $\varepsilon$ value down to some extent. However, our lowest $\varepsilon$ is still too high to carry over into real-world applications of privacy preservation. As outlined by \citet{hsu2014differential}, values of $\varepsilon$ for different applications in the DP literature can range from 0.01 to 10. Choosing the right $\varepsilon$ value depends on the specific queries that are computed and the nature of the data \citep{lee2011much}.

For our case, the value of $\varepsilon$ can be interpreted in the following manner.
The $\varepsilon$-LDP mechanism that we are applying to our data makes any two input texts rewritten to be indistinguishable up to a factor of $e^{\varepsilon}$.
More formally, \textit{for any two input texts} $x$ and $y$ to our LDP model $\mathcal{M}$:
\begin{equation}
    \frac{\text{Pr}[\mathcal{M}(x) = z]}{\text{Pr}[\mathcal{M}(y) = z]} \leq e^{\varepsilon},
\end{equation}
where $z$ is a given output text rewritten by the model.

This means that, when we set $\varepsilon = 250$, then any two texts will remain indistinguishable up to a factor of $e^{250}$.
This is a very weak bound and, while it could provide some empirical privacy guarantees, on a theoretical level the privacy protection is not very strong.
We can also see how this bound becomes exponentially stronger, as we decrease $\varepsilon$.

It may therefore make sense to take a slightly less strict approach to text adjacency, for instance moving into \textit{domain specific} text rewriting. For example, text rewriting could be carried out for a specific dataset, with the notion of adjacency restricted to any two individuals within that dataset, hence requiring much less perturbation.
The strength of the privacy guarantee, in this case, would then be very dependent on the size of the dataset \citep{mehner2021towards}.

\section*{Acknowledgements}

This project was supported by the National Research Center for Applied Cybersecurity ATHENE and by the PrivaLingo research grant (Hessisches Ministerium des Innern und für Sport).
The independent research group TrustHLT is supported by the Hessian Ministry of Higher Education, Research, Science and the Arts. 
Thanks to Lena Held and Luke Bates for their helpful feedback and to Antti Honkela for very helpful hints regarding the limitations of the `classical' Gaussian mechanism.

\bibliography{custom}
\bibliographystyle{acl_natbib}

\appendix

\section{Background}
\label{sec:appx-background}

\paragraph{Differential Privacy}

Differential privacy (DP), originally proposed by \citet{dwork2006calibrating}, is a formal guarantee that the output of some analysis on a given dataset is nearly indistinguishable when one data point is modified.
In other words, no individual can stand out as a result of this analysis, preserving their privacy.

To define this more formally, we first outline the notion of \textit{neighboring datasets}.

\begin{mydef}
Two datasets $D$ and $D'$ are considered \textbf{neighboring} if they differ in at most one record, i.e., one individual's data point. This means that either $D' = D \pm 1$, or $D' = D$ with the $i$-th data point replaced.
\end{mydef}

In DP, we typically refer to a \textit{query} on a dataset, as defined below.

\begin{mydef}
A \textbf{query} is a function $f: D \rightarrow \mathbb{R}^k$ that we evaluate on a dataset $D$.
\end{mydef}
\noindent This can range from simpler queries, such as taking the average length of a document, to more complex queries, e.g. a deep learning model predicting the sentiment of a document.

In order to provide a formal privacy guarantee, we add \textit{randomness} to this query by perturbing $f(D)$. We refer to this randomized function as a \textit{randomized mechanism} $\mathcal{M}(D; f)$.

The formal definition of differential privacy can now be described as follows.
\begin{mydef}
For $\varepsilon \geq 0$ and $\delta \in [0,1]$, a mechanism $\mathcal{M}$ is $(\varepsilon, \delta)$-differentially private if, for all $S \subseteq \text{Range}(\mathcal{M})$, and for any two neighboring datasets $D$ and $D'$, the following holds true:
\begin{equation}
    \Pr[\mathcal{M}(D) \in S] \leq e^{\varepsilon}\Pr[\mathcal{M}(D') \in S] + \delta
    \label{eqn:dp-definition}
\end{equation}
\end{mydef}
Importantly, $\varepsilon$ is the \textit{privacy budget}. The lower it is, the more private the mechanism is, since the two output distributions of $\mathcal{M}(D)$ and $\mathcal{M}(D')$ are constrained to be more similar. While the original definition of DP only included this term, the additional additive term $\delta$ was later introduced in \citet{dwork2006our} and represents the `cryptographically small' probability that pure $\varepsilon$-DP is broken. In the case where $\delta = 0$, we return to the original, or \textit{pure differential privacy} setting.

In order to make a query differentially private, random noise is added based on the query's \textit{sensitivity}, or the maximum amount that the output of the query can change. This represents the degree to which one individual can affect $f$ in the worst case. In turn, this is also the amount of noise that has to be introduced to $f$ in order to obscure one individual's contribution.

\begin{figure*}[!ht]
    \centering
    \includegraphics[width=0.85\linewidth]{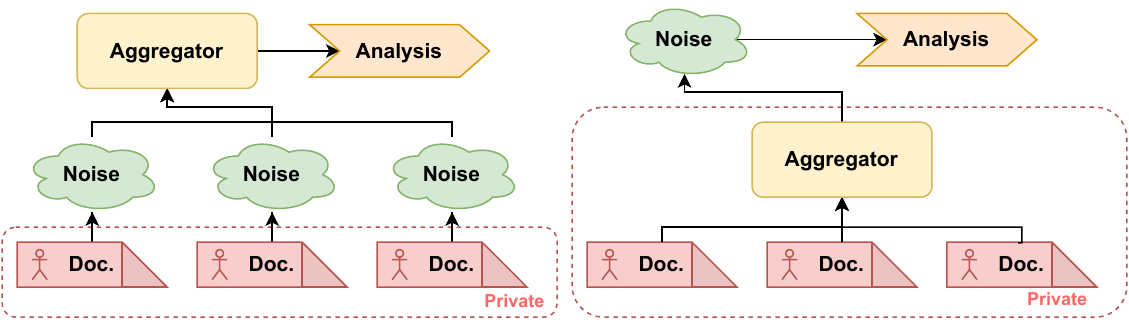}
    \caption{Local DP (left) vs. global DP (right). In the local framework, the aggregator does not have access to the original data, with each individual applying DP to their own private data point. In the global framework, the aggregator adds DP noise to the original data, given a specific query from an analyst.}
    \label{fig:ldp-vs-cdp}
\end{figure*}

Finally, there are two primary settings for differential privacy: \textbf{global DP} and \textbf{local DP (LDP)}, depicted in Figure~\ref{fig:ldp-vs-cdp}. In the former case, the query $f(D)$ is first evaluated, and then perturbed by a trusted data aggregator. In contrast, in LDP \textit{each individual data holder perturbs his/her own data point}, prior to data collection and without replying on a third-party.
As noted by \citet{wang2020comprehensive}, in LDP \textit{any two data points} are considered neighboring, in contrast to the global DP definition of two datasets differing in one record.

Since each individual is fully in control of the privatization process, this makes LDP a particularly attractive setting for providing a privacy guarantee. 
The difficulty, however, is that we typically require orders of magnitude more perturbation for $f$ than we otherwise would need in the global DP setting.
Refer to \citet{Igamberdiev.Habernal.2022.LREC,Senge.et.al.2022.EMNLP,Habernal.2022.ACL} for further use-case examples in NLP.

\paragraph{Overview of the BART Model}

The BART model is a sequence-to-sequence Transformer architecture \citep{vaswani2017attention}, acting as a denoising autoencoder. It combines the BERT-like bidirectional encoder \citep{devlin-etal-2019-bert} with the GPT-like left-to-right autoregressive decoder \citep{radford2018improving}. The base model contains 6 layers for the encoder and decoder, with cross-attention performed over the final encoder layer. BART is pre-trained through a number of noise transformations of the input document, including token masking, token deletion, and sentence permutation, optimizing a cross-entropy reconstruction loss.
One strong benefit of BART for differentially private text rewriting is that, by design, it is well-equipped for the autoencoding task of reconstructing corrupted documents.

\paragraph{Overview of pruning for neural networks}

The more popular technique is weight pruning (e.g. \citet{lecun1989optimal}, \citet{hassibi1993optimal}, \citet{frankle2018lottery}), reducing the size of a model and its computation time, but minimizing any negative impact to its performance.
More distinct is structured pruning, such as neuron pruning (e.g. \citet{kruschke1991benefits}), where the architecture of a network is reduced by eliminating full structures of the network,
which is more in line with our approach.
Regardless of the specific method used, a very common pipeline is to iteratively prune and further train a model, to help it recover from potentially lost performance \citep{han2015learning}.
Overall, pruning tends to be highly effective, with substantial compression possible for models \citep{blalock2020state}.

In contrast to the above goals of size and computational efficiency, we use pruning with the primary objective of \textit{dimensionality reduction} on a specific hidden layer.
This dimensionality is directly related to privacy concerns, with a lower dimension resulting in less added noise to the model, which allows us to use lower privacy budgets while maintaining better performance. Our neuron-based pruning approach is outlined in Section~\ref{sec:dp-bart-pr}.

\section{Selecting Clipping Value for DP-BART}
\label{sec:appx-c-value-selection}

When clipping encoder outputs by value for the DP-BART model, we want to choose left and right values $C_{min}$ and $C_{max}$ that capture the most information from the original vector.
One way to go about this, is to estimate the distribution of the encoder output vectors $z \in \mathbb{R}^n$ (see equation~\ref{eqn:clv-encoder}) of a pre-trained BART model checkpoint, given several documents from an external public dataset, and then clip a certain number of standard deviations from the estimated mean.
Performing an exploratory data analysis on these encoder output vectors, we noticed that they fairly closely match a Gaussian distribution, although with far more outliers.

In order to look into this more closely, we can perform Maximum Likelihood Estimation (MLE) to estimate the $\mu$ and $\sigma^2$ parameters, assuming the data follows a Gaussian distribution. For Gaussians, the MLE of these two parameters is simply the mean and variance of the existing data, respectively, in our case of the values of $z$, given an input document $x$.
Based on multiple documents, we find that $\mu \approx 0.00$ and $\sigma \approx 0.2$.
Hence, using the \textit{66-95-99.7 rule} for normal distributions, we can clip two standard deviations to the left and right and retain 95\% of the original values.

We therefore initially set $C = C_{max} = -C_{min}$, where $C = \mu + \sigma \cdot 2 = 0 + 0.2 \cdot 2 = 0.4$.
Since $\mu$ is found to be $0$, we are able to simplify the calculation to only have the $C$ and $\sigma$ parameters.
In practice, we found that clipping only half of one standard deviation was enough to retain good performance, despite clipping away more information than what we estimate above. Hence, we set $C = \sigma / 2 = 0.1$.

\section{Proof of Theorem~\ref{thm:dp-bart-clv-laplace}}
\label{sec:appx:thm-clv-laplace-proof}

\begin{proof}
The $\ell_1$ sensitivity of a function $f: \mathbb{R}^n \rightarrow \mathbb{R}^n$ is defined as: $\Delta_1f = \underset{x,y}{\max}||f(x) - f(y)||_1$, where $||x-y||_1=1$.
Since in our case $f$ clips every value to be in the range $[-C, C]$, the following inequality must be true.

\begin{align}
||f(x)-f(y)||_1 & = |f(x_1) - f(y_1)| + \dots \nonumber\\
          & \quad + |f(x_n) - f(y_n)| \nonumber\\
          & \leq |C - (-C)| + \dots \nonumber\\
          & \quad + |C - (-C)| \nonumber\\
          & = |2C| + \dots + |2C| \nonumber\\
          & = 2Cn
\end{align}
This inequality also holds true when the $C$ values are reversed for any summand, due to the absolute value: $|C - (-C)| = |-C - C|$.

\end{proof}

\section{Proof of Theorem~\ref{thm:dp-bart-clv-gaussian}}
\label{sec:appx:thm-clv-gaussian-proof}

\begin{proof}
The $\ell_2$ sensitivity of a function $f: \mathbb{R}^n \rightarrow \mathbb{R}^n$ is defined as: $\Delta_2f = \underset{x,y}{\max}||f(x) - f(y)||_2$, where $||x-y||_1=1$.
As for the $\ell_1$ sensitivity above, $f$ clips every value to be in the range $[-C, C]$, so the following inequality must be true.

\begin{align}
||f(x)-f(y)||_2 & = \sqrt{\begin{aligned}(&f(x_1) - f(y_1))^2 + \dots\\ &+ (f(x_n) - f(y_n))^2\end{aligned}} \nonumber\\
          & \leq \sqrt{\begin{aligned}(&C - (-C))^2 + \dots\\ &+ (C - (-C))^2\end{aligned}} \nonumber\\
          & = \sqrt{(2C)^2 + \dots + (2C)^2} \nonumber\\
          & = 2C\sqrt{n}
\end{align}
This inequality also holds true when we reverse the position of the $C$ values for any summand, $(C - (-C))^2 = (-C - C)^2$.

\end{proof}

\section{Pruning algorithm for DP-BART-PR}
\label{sec:appx-algo}

We present the procedure for pruning neurons in DP-BART-PR in Algorithm~\ref{alg:dp-bart-pr}.

\begin{algorithm}[h!]
    \caption{DP-BART Pruning}
    \label{alg:dp-bart-pr}
    \begin{algorithmic}[1]
        \Require Encoder: $\textsc{Enc}_{\theta_0}$, Decoder: $\textsc{Dec}_{\theta_0}$, Public dataset: $\mathcal{D}$, Encoder output dimension per token: $d_{tok}$, Number of epochs to additionally train: $E$
        \Ensure Pruned model: $\textsc{Enc}_{\theta_E}$, $\textsc{Dec}_{\theta_E}$;
        Array of neuron indices to prune $P$ of size $d_{tok}$
        \Function{prune}{$z$, $P$}
            \LineComment{$z \in \mathbb{R}^{l \times d_{tok}}$, where $l$ is the seq. length}
                \For{$j$ in $1$ to $d_{tok}$}
                    \If{$j$ in $P$}
                        \LineComment{Set that neuron to 0 for all tokens}
                        \State $z[:, j] \leftarrow 0$
                    \EndIf
                \EndFor
            \State \Return $z$
        \EndFunction
        \Function{iter\_pr}{$\mathcal{D}$, $\textsc{Enc}_{\theta}$, $\textsc{Dec}_{\theta}$, $P$}
            \LineComment{Iterate with pruning}
            \For{each document $x$ in $\mathcal{D}$}
                \State Compute encoder outputs, $z \leftarrow$ $\textsc{Enc}_{\theta}(x)$
                \State Prune, $z_{pr}\leftarrow$ \Call{prune}{$z$, $P$}
                \State Decode, $\hat{y} \leftarrow$ $\textsc{Dec}_{\theta}(z_{pr})$
                \State Compute loss on $\hat{y}$ and optimize
            \EndFor
        \EndFunction
        \Function{add\_p\_idxs}{$P$}
            \State $new\_idxs \leftarrow$ select $k$ values in $[1,d_{tok})$
            \State Append $new\_idxs$ to $P$
            \State \Return $P$
        \EndFunction
        \State
        \State $P \leftarrow$ new Array
        \For{epoch $e$ in $1$ to $E$}
            \State $P \leftarrow$ \Call{add\_p\_idxs}{$P$}
            \State \Call{iter\_pr}{$\mathcal{D}$, $\textsc{Enc}_{\theta_e}$, $\textsc{Dec}_{\theta_e}$, $P$}
        \EndFor
        \State \Return $\textsc{Enc}_{\theta_E}$, $\textsc{Dec}_{\theta_E}$, $P$
    \end{algorithmic}
\end{algorithm}

\section{Selecting Neurons for Pruning}
\label{sec:appx-pruning}

At each pruning/training iteration for preparing the DP-BART-PR model, we need some criteria for selecting the next set of neuron indices that will be set to 0.
Our method for selecting these is generally in line with previous work on pruning \citep{blalock2020state}, using weight magnitudes to determine relative importance of those weights.

In the cross-attention module of the decoder of a transformer model such as BART, there are three initial projections of the input or target representations: Key (K), Query (Q), and Value (V).
The K and V projections come directly from the encoder output vectors multiplied by a weight matrix for each, while the Q projection comes from the decoder's intermediate representations multiplied by a weight matrix.
We can therefore choose the weight matrix of either the K or V projection from the cross-attention module of one of the decoder's layers.
For this weight matrix, we take the sum of absolute values of all weights associated with a particular neuron, to give a general indication of its importance.
Given the distribution of these values associated with each neuron, we take the 25\% quantile as the threshold. Any neuron with a value below this threshold is selected for pruning and set to 0.

At the next pruning iteration, after further training, we repeat the above process, this time only taking into account neurons that have not already been set to 0.
We again calculate each neuron's relative importance value, taking the 25\% quantile of these new values as the next threshold, and selecting any neurons with an associated importance value below it for pruning.

We found that taking the weight matrix of the K projection from the initial decoder layer outperformed all other configurations, such as using subsequent layers, or the V projection.
Additionally, we found the above method to outperform randomly pruning neurons.

We perform two additional tweaks to this process to improve results further.
First, we include the clipping by value procedure, with $C=0.2$, when further training the model at each pruning iteration.
We found that, without this step, the encoder output representations tend to shift to a distribution of values with a greater standard deviation. This then requires a larger $C$ value when determining the mechanism's sensitivity in equations~\ref{eqn:laplace-clv-sens} and \ref{eqn:gaussian-clv-sens}, which in turn requires a greater noise scale in equation~\ref{eqn:clv-mechanism}.
By including this clipping, we encourage encoder output representations to continue to primarily stay within the range $(-C, C)$.

The other tweak that we found to further improve results is to prune and further train the BART model for $k$ iterations, but then use the neuron indices for pruning from the $k-1$ iteration.
Performing this full pruning pipeline on a public dataset, we found that the best BLEU scores for rewriting at various $\varepsilon$ values are after pruning/training the model for $6$ iterations, then using the pruning indices from the $5^{th}$ iteration for actual rewriting of downstream datasets.
This amounts to a total of $586$ out of $768$ ($76.30$\%) neurons pruned for each token.

In theory, this pruning procedure could be replaced with another dimensionality reduction technique for the last hidden state of the encoder outputs (e.g. a bottleneck layer and its inverse). In our experiments, however, the above pruning procedure produced superior results when trying various options for such a bottleneck layer. This includes architectures such as a feedforward neural network and CNN \citep{lecun1998gradient}, as well as various training methods (e.g. training these layers separately and reinserting them into the final full model, or training the full model together with these layers).

\section{Proof of Theorem~\ref{thm:dp-bart-pr-proof}}
\label{sec:appx-thm-pr-proof}

\begin{proof}
The procedure outlined in Algorithm~\ref{alg:dp-bart-pr} is performed on a public dataset, unrelated to the downstream data that is considered sensitive, hence no privacy budget is used up.

The remaining rewriting procedure with the pruned indices is exactly the same as for \textbf{DP-BART-CLV}, just at a lower dimension.
The neuron indices that are set to $0$ are the same for any input document.
This means that no information from the input is encoded in these neuron indices. From the DP point of view, these zeroed neurons are the same for any two neighboring data points. Therefore, these neurons have no contribution to the DP sensitivity and do not require any privatization.
The same proofs are therefore valid as for Theorems~\ref{thm:dp-bart-clv-laplace} and \ref{thm:dp-bart-clv-gaussian} for the Laplace and Analytic Gaussian mechanisms, respectively.
\end{proof}

\section{Preparation of Larger Datasets}
\label{sec:appx-dataset-prep}

\subsection{Drugs.com reviews dataset}

We present additional statistics on the Drugs.com dataset in Table~\ref{tab:drugscom-stats}.
We note the class imbalance of the original dataset, where the majority class was the highest rating $9$, from a score of $0$ to $9$, which accounted for approximately 17\% of the total training set.
This contributes to the relative imbalance of the positive and negative classes in our binary class version of the dataset.

\begin{table}[h]
    \centering
    \begin{tabular}{l|cc}
        & \textbf{\texttt{\#} Train} & \textbf{\texttt{\#} Test}  \\
        \hline
        \textbf{\texttt{\#} Positive} & 97,410 & 32,349 \\
        \textbf{\texttt{\#} Negative} & 63,887 & 21,417 \\
        \textbf{\texttt{\#} Total} & 161,297 & 53,766 \\
    \end{tabular}
    \caption{Class distributions and total documents for the Drugs.com reviews dataset. Original classes $8$ and $9$ converted to the \textit{positive} class, while the rest to the \textit{negative} class for our experiments.}
    \label{tab:drugscom-stats}
\end{table}

\subsection{Amazon reviews dataset}
For the Amazon dataset, since using the full $144$M reviews is too computationally expensive, we reduce this to a more practical size, while still being comparatively larger than the other downstream datasets.
To prepare a subset of the full Amazon dataset, we first select several product categories based on four criteria.
(1) The category is large enough (e.g. $>2M$ reviews).
(2) Label $5$ for the star rating is not too dominant (e.g. $<60$\%), see general imbalance outlined in Table~\ref{tab:amazon-categories}.
(3) Label $4$ for the star rating is also not too dominant (e.g. $<60$\%), since we are merging labels $5$ and $4$ into the \textit{positive} class.
(4) Label $1$ for the star rating has enough representation (e.g. $>10$\%).

We selected a total of $7$ product categories, which matched at least three out of four of these criteria.
From these reviews, we then filtered to include only those with $20$ tokens or less, to fit our experimental scenario of shorter documents (outlined in more detail in Appendix~\ref{sec:appx-hyperparameters}).
We then reduced this further by balancing positive and negative classes, with uniform probability selecting only $N_{neg}$ positive label reviews, where $N_{neg}$ is the number of negative labels in our current subset.
Finally, we uniformly selected two-thirds of the resulting balanced dataset to reach the final size of approximately $2$M reviews.
We present each product category and its corresponding size in Table~\ref{tab:amazon-categories}.

Importantly, we have a well-defined train-test split, taking 10\% of the processed dataset and setting it aside for final downstream test evaluations.
We release the specific document indices of our subset from the original large Amazon reviews dataset.\footnote{Original full dataset available on Huggingface at \url{https://huggingface.co/datasets/amazon_us_reviews}, our subset available at \url{https://github.com/trusthlt/dp-bart-private-rewriting/tree/main/assets/amazon_reviews_subset}.}
We present the final dataset statistics in Table~\ref{tab:amazon-final-stats}.

\begin{table*}[h]
    \centering
    \begin{tabular}{l|rr}
       \textbf{Product Cat.}  & \textbf{\texttt{\#} Docs. (original)} & \textbf{\texttt{\#} Docs. (subset)}  \\
       \hline
        Digital\_Video\_Games\_v1\_00 & 145,341 & 11,375 \\
        Electronics\_v1\_00 & 3,093,869 & 201,708 \\
        Lawn\_and\_Garden\_v1\_00 & 2,557,288 & 202,226 \\
        Major\_Appliances\_v1\_00 & 96,901 & 4,940 \\
        Mobile\_Apps\_v1\_00 & 5,033,376 & 536,550 \\
        Office\_Products\_v1\_00 & 2,642,434 & 182,202 \\
        Wireless\_v1\_00 & 9,002,021 & 976,801 \\
        Total & 22,571,320 & 2,115,802 \\
    \end{tabular}
    \caption{Product categories and corresponding number of documents from the full Amazon reviews dataset (mid), as well as from our prepared subset (right).}
    \label{tab:amazon-categories}
\end{table*}

\begin{table}[h]
    \centering
    \begin{tabular}{l|cc}
        & \textbf{\texttt{\#} Train} & \textbf{\texttt{\#} Test}  \\
        \hline
        \textbf{\texttt{\#} Positive} & 952,153 & 105,797 \\
        \textbf{\texttt{\#} Negative} & 952,044 & 105,808 \\
        \textbf{\texttt{\#} Total} & 1,904,197 & 211,605 \\
    \end{tabular}
    \caption{Final class distributions and total reviews for our Amazon reviews subset.}
    \label{tab:amazon-final-stats}
\end{table}

\section{Hyperparameter Configuration}
\label{sec:appx-hyperparameters}

For all our model configurations, we use a sequence length of $20$ tokens. This limits the sensitivity in equations~\ref{eqn:laplace-clv-sens} and \ref{eqn:gaussian-clv-sens} for our three BART models.
For the ADePT model, we found that it is generally ineffective at the autoencoding task when using larger sequence lengths, presumably due to the problem of vanishing gradients for RNN-based models \citep{pascanu2013difficulty}.
Our search space for learning rates is in the range $[10^{-6}, 0.01]$. We use batch sizes of either $32$ or $64$.

When pre-training ADePT, we include the clipping procedure from equation~\ref{eqn:cln-module}, otherwise the model is unable to properly rewrite a given input document, since the clipping significantly alters the encoder output representations.
Additional hyperparameters for ADePT include an embedding size of 300 with pre-trained GloVe embeddings\footnote{Downloaded from \url{https://nlp.stanford.edu/data/glove.6B.zip}} \citep{pennington2014glove} and a hidden size of $512$. Combining the LSTM cell and hidden state sizes, the ADePT encoder output vectors have a dimensionality of $512 \cdot 2 = 1024$.

For rewriting using the Analytic Gaussian mechanism, we always keep the $\delta$ value below $1/N$, where $N$ is the total number of documents for a given dataset.
This is based on the idea that using a $\delta$ value that is overly large in relation to the dataset size can lead to potential privacy leaks, hence maintaining $\delta \ll 1/N$ is a good guideline to follow \citep{abadi2016deep}.
We therefore use a $\delta$ value of $10^{-5}$ for the ATIS, Snips and IMDb datasets, $10^{-6}$ for the Drugs.com dataset, and $10^{-7}$ for Amazon reviews.
We perform rewriting with beam search, using a beam size of $10$.

When performing additional training for the DP-BART-PR model, we again use the Openwebtext corpus. At each stage of pruning, we train the model for $500$ iterations at a batch size of $32$.
In the case of further training for the DP-BART-PR+ model, we again use the Openwebtext corpus, with the same number of iterations and batch size, but performed over multiple epochs.
The number of epochs ranges from 100 to 500, for the different $\varepsilon$ values from $2500$ down to $10$, based on the prediction loss and intermediate model outputs.
We applied these further training steps to the DP-BART-CLV model as well to account for the potential effects of this training alone, but we did not find any improvements.
This is in line with the high dimensionality issue of DP-BART-CLV destroying input representations in the private setting, which this additional training does not resolve without the pruning adjustments of the DP-BART-PR(+) models.

Regarding downstream text classification experiments, we run each configuration for a maximum of 50 epochs with three random seeds and report the mean. We use an early stopping patience of 5 epochs. We also report the standard deviation in Appendix~\ref{sec:appx-intrinsic-eval-and-detailed-res}. We outline our choice of the clipping by value constant $C$ in Appendix~\ref{sec:appx-c-value-selection} and amount of pruning in Appendix~\ref{sec:appx-pruning}.

Finally, our computational runtimes are under 1 hour for each configuration that does not use the Amazon dataset. The only exception to this is the Drugs.com reviews dataset, which reaches up to 2 hours 10 minutes for rewriting with the DP-BART models. The Amazon dataset takes significantly longer, with approximately 24 hours for rewriting with ADePT, 47 hours rewriting with DP-BART models, as well as up to 18 hours for downstream experiments, depending on when the early stopping condition is reached. We run experiments on a 32GB NVIDIA V100 Tensor Core GPU.

\section{Downstream Experimental Setup}
\label{sec:appx-downstream-setup}

We use a pre-trained BERT model \citep{devlin-etal-2019-bert} for running downstream experiments on the rewritten texts. We add a feedforward layer on top of the BERT model, taking as input the mean of its last hidden states. The model predicts the output label for text classification. For training the model and running validation, we use the rewritten training and validation partitions for each downstream dataset, at a given privacy configuration. For final evaluation, we run the model on the original test set of each dataset.

\section{Intrinsic evaluations and detailed downstream results}
\label{sec:appx-intrinsic-eval-and-detailed-res}

For intrinsic evaluation, we use BLEU scores to measure how close the input and rewritten output texts are to one another. Despite some criticisms of BLEU as a general-purpose evaluation metric for text generation (e.g. \citet{callison2006re}), it perfectly fits our scenario. Being a metric of n-gram overlap, it allows us to compare how similar the inputs and outputs are. In a way, a very high BLEU score points to privacy leakage, since it is showing how much of the original text remains in the output. We would therefore expect well privatized texts to have a relatively low BLEU score.

Our results can be seen in Table~\ref{tab:bleu-scores-and-full-f1} for rewriting the training partition of each dataset with the Analytic Gaussian mechanism, together with the detailed downstream test $F_1$ results.

We can see that the BLEU scores for the training partition of each dataset show a largely positive correlation with the test $F_1$ downstream results, where a decrease in the former also indicates a decrease in the latter.
For instance, the Snips dataset shows a BLEU score of 0.31 at $\varepsilon=2500$ for DP-BART-PR+, with a test $F_1$ score of 85\%. At $\varepsilon=750$, this drops down to 0.23 BLEU score and 73\% test $F_1$. By $\varepsilon=250$, the BLEU score is at 0.07, with the test $F_1$ score at 60\%.
Interestingly, despite lower BLEU scores, the downstream model is still able to sometimes learn the task successfully, obtaining a good $F_1$ score on the original test set.

Another example of this can be seen for the DP-BART-PR model on the Amazon dataset at $\varepsilon = 1000$, with a BLEU score of 0.17, reaching a test $F_1$ of 82\%. A similar instance is DP-BART-PR+ rewriting Amazon at $\varepsilon=250$, with a BLEU score of 0.15 and a test $F_1$ of 82\%, compared to the non-private $F_1$ of 91\%.
This is in line with the goals of text privatization, where original identifying elements of the text are removed, but key features from the input are retained for good downstream performance.

\begin{table*}[h]
    \centering
    \resizebox{\textwidth}{!}{
    \begin{tabular}{lr|r|rr|rr|rr|rr}
        \textbf{Dataset} & $\varepsilon$ & \textbf{Original} & \multicolumn{2}{c}{\textbf{ADePT}} & \multicolumn{2}{c}{\textbf{DP-BART-CLV}} & \multicolumn{2}{c}{\textbf{DP-BART-PR}} & \multicolumn{2}{c}{\textbf{DP-BART-PR+}} \\
        & & Test $F_1$ & BLEU & Test $F_1$ & BLEU & Test $F_1$ & BLEU & Test $F_1$ & BLEU & Test $F_1$ \\
        \hline
        \textbf{Snips} & $\infty$ & 0.98 (0.00) & 6.34 & 0.92 (0.00) & 98.41 & 0.98 (0.00) & 54.39 & 0.98 (0.00) & N/A & N/A \\
        & $2,500$ & & 0.16 & 0.24 (0.13) & 0.02 & 0.22 (0.14) & 2.19 & 0.88 (0.03) & 0.31 & 0.85 (0.02) \\
        & $1,000$ & & 0.03 & 0.13 (0.09) & 0.00 & 0.12 (0.10) & 0.07 & 0.50 (0.07) & 0.30 & 0.80 (0.02) \\
        & $750$ & & 0.02 & 0.16 (0.08) & 0.00 & 0.11 (0.07) & 0.02 & 0.44 (0.11) & 0.23 & 0.73 (0.04) \\
        & $500$ & & 0.01 & 0.14 (0.08) & 0.00 & 0.11 (0.06) & 0.01 & 0.39 (0.10) & 0.22 & 0.65 (0.01) \\
        & $250$ & & 0.01 & 0.10 (0.09) & 0.00 & 0.11 (0.07) & 0.00 & 0.08 (0.02) & 0.07 & 0.60 (0.03) \\
        & $100$ & & 0.01 & 0.08 (0.02) & 0.00 & 0.08 (0.02) & 0.00 & 0.08 (0.02) & 0.00 & 0.05 (0.02) \\
        & $50$ & & 0.01 & 0.09 (0.05) & 0.00 & 0.13 (0.06) & 0.00 & 0.08 (0.03) & 0.00 & 0.05 (0.02) \\
        & $10$ & & 0.01 & 0.05 (0.01) & 0.00 & 0.10 (0.04) & 0.00 & 0.08 (0.03) & 0.00 & 0.05 (0.01) \\
        \hline
        \textbf{ATIS} & $\infty$ & 0.89 (0.01) & 16.04 & 0.32 (0.01) & 97.45 & 0.80 (0.03) & 69.26 & 0.85 (0.01) & N/A & N/A \\
        & $2,500$ & & 0.45 & 0.09 (0.00) & 0.02 & 0.09 (0.00) & 2.13 & 0.14 (0.07) & 0.24 & 0.13 (0.07) \\
        & $1,000$ & & 0.06 & 0.09 (0.00) & 0.01 & 0.09 (0.00) & 0.06 & 0.08 (0.00) & 0.25 & 0.13 (0.03) \\
        & $750$ & & 0.05 & 0.08 (0.00) & 0.00 & 0.08 (0.00) & 0.03 & 0.08 (0.00) & 0.24 & 0.11 (0.05) \\
        & $500$ & & 0.03 & 0.09 (0.00) & 0.00 & 0.08 (0.00) & 0.01 & 0.08 (0.00) & 0.11 & 0.08 (0.00) \\
        & $250$ & & 0.01 & 0.08 (0.00) & 0.00 & 0.09 (0.00) & 0.01 & 0.08 (0.00) & 0.08 & 0.08 (0.00) \\
        & $100$ & & 0.01 & 0.08 (0.00) & 0.00 & 0.08 (0.00) & 0.00 & 0.08 (0.00) & 0.00 & 0.06 (0.04) \\
        & $50$ & & 0.01 & 0.08 (0.00) & 0.00 & 0.08 (0.00) & 0.00 & 0.08 (0.00) & 0.00 & 0.06 (0.04) \\
        & $10$ & & 0.01 & 0.09 (0.00) & 0.00 & 0.08 (0.00) & 0.00 & 0.08 (0.00) & 0.00 & 0.07 (0.02) \\
        \hline
        \textbf{IMDb} & $\infty$ & 0.86 (0.00) & 95.00 & 0.72 (0.00) & 93.49 & 0.72 (0.00) & 89.05 & 0.72 (0.00) & N/A & N/A \\
        & $2,500$ & & 1.74 & 0.49 (0.04) & 0.22 & 0.42 (0.04) & 7.08 & 0.64 (0.02) & 1.69 & 0.63 (0.01) \\
        & $1,000$ & & 0.18 & 0.49 (0.06) & 0.16 & 0.40 (0.05) & 0.25 & 0.47 (0.04) & 1.04 & 0.60 (0.02) \\
        & $750$ & & 0.07 & 0.43 (0.08) & 0.15 & 0.47 (0.03) & 0.15 & 0.47 (0.05) & 0.76 & 0.58 (0.02) \\
        & $500$ & & 0.04 & 0.44 (0.02) & 0.12 & 0.43 (0.03) & 0.05 & 0.45 (0.05) & 0.52 & 0.53 (0.04) \\
        & $250$ & & 0.03 & 0.46 (0.02) & 0.11 & 0.43 (0.02) & 0.09 & 0.46 (0.02) & 0.32 & 0.55 (0.03) \\
        & $100$ & & 0.02 & 0.46 (0.03) & 0.08 & 0.45 (0.03) & 0.06 & 0.43 (0.06) & 0.00 & 0.38 (0.03) \\
        & $50$ & & 0.01 & 0.43 (0.01) & 0.08 & 0.46 (0.08) & 0.04 & 0.45 (0.05) & 0.00 & 0.40 (0.06) \\
        & $10$ & & 0.01 & 0.44 (0.07) & 0.05 & 0.46 (0.04) & 0.03 & 0.45 (0.07) & 0.00 & 0.41 (0.06) \\
        \hline
        \textbf{Drugs.com} & $\infty$ & 0.78 (0.02) & 92.41 & 0.74 (0.01) & 93.46 & 0.77 (0.01) & 88.47 & 0.76 (0.01) & N/A & N/A \\
        & $2,500$ & & 1.62 & 0.37 (0.00) & 0.15 & 0.37 (0.00) & 5.59 & 0.62 (0.02) & 0.99 & 0.38 (0.00) \\
        & $1,000$ & & 0.12 & 0.37 (0.00) & 0.08 & 0.37 (0.00) & 0.15 & 0.37 (0.00) & 0.46 & 0.39 (0.02) \\
        & $750$ & & 0.05 & 0.37 (0.00) & 0.07 & 0.37 (0.00) & 0.08 & 0.37 (0.00) & 0.38 & 0.37 (0.00) \\
        & $500$ & & 0.03 & 0.37 (0.00) & 0.06 & 0.37 (0.00) & 0.05 & 0.37 (0.00) & 0.28 & 0.37 (0.00) \\
        & $250$ & & 0.02 & 0.37 (0.00) & 0.06 & 0.37 (0.00) & 0.05 & 0.37 (0.00) & 0.20 & 0.37 (0.00) \\
        & $100$ & & 0.01 & 0.37 (0.00) & 0.05 & 0.37 (0.00) & 0.04 & 0.37 (0.00) & 0.00 & 0.37 (0.00) \\
        & $50$ & & 0.01 & 0.37 (0.00) & 0.04 & 0.37 (0.00) & 0.03 & 0.37 (0.00) & 0.00 & 0.37 (0.00) \\
        & $10$ & & 0.01 & 0.37 (0.00) & 0.04 & 0.37 (0.00) & 0.03 & 0.37 (0.00) & 0.00 & 0.37 (0.00) \\
        \hline
        \textbf{Amazon} & $\infty$ & 0.91 (0.00) & 26.96 & 0.90 (0.00) & 96.52 & 0.90 (0.00) & 57.16 & 0.91 (0.00) & N/A & N/A \\
        & $2,500$ & & 0.57 & 0.70 (0.01) & 0.24 & 0.81 (0.04) & 3.44 & 0.87 (0.01) & 0.87 & 0.87 (0.00) \\
        & $1,000$ & & 0.09 & 0.51 (0.01) & 0.22 & 0.40 (0.12) & 0.17 & 0.82 (0.01) & 0.66 & 0.85 (0.00) \\
        & $750$ & & 0.06 & 0.46 (0.15) & 0.20 & 0.38 (0.09) & 0.13 & 0.83 (0.01) & 0.46 & 0.84 (0.01) \\
        & $500$ & & 0.05 & 0.27 (0.05) & 0.17 & 0.33 (0.00) & 0.12 & 0.79 (0.04) & 0.33 & 0.83 (0.00) \\
        & $250$ & & 0.04 & 0.32 (0.02) & 0.13 & 0.33 (0.00) & 0.14 & 0.33 (0.00) & 0.15 & 0.82 (0.01) \\
        & $100$ & & 0.04 & 0.37 (0.08) & 0.11 & 0.33 (0.00) & 0.12 & 0.33 (0.01) & 0.00 & 0.33 (0.00) \\
        & $50$ & & 0.04 & 0.32 (0.02) & 0.10 & 0.33 (0.00) & 0.10 & 0.33 (0.00) & 0.00 & 0.33 (0.00) \\
        & $10$ & & 0.04 & 0.43 (0.16) & 0.09 & 0.38 (0.09) & 0.09 & 0.33 (0.00) & 0.00 & 0.33 (0.00) \\
    \end{tabular}
    }
    \caption{BLEU scores for the training partition of each dataset and downstream macro-averaged test $F_1$ performance, with each of the four models using the Analytic Gaussian mechanism and the original test $F_1$ results provided for comparison. Test $F_1$ scores shown as ``mean (standard deviation)'', averaging over results using three random seeds. `N/A' refers to configurations that we did not run for DP-BART-PR+, since there are no additional noisy training steps at $\varepsilon = \infty$. Higher BLEU corresponds to better performance of the rewriting model for intrinsic evaluation, higher test $F_1$ corresponds to better downstream performance using the rewritten dataset for training. Lower $\varepsilon$ corresponds to better privacy.}
    \label{tab:bleu-scores-and-full-f1}
\end{table*}

\section{Sample rewritten texts}
\label{sec:appx-sample-texts}

\subsection{Comparing rewritten texts across privacy budgets}
\noindent \textbf{Original}  It slows the game performance a bit, but it's totally worth it! \\
$\bm{\varepsilon = 2500}$  The performance of the game is a bit sluggish, but it's worth it \\
$\bm{\varepsilon = 1000}$  It's that time of year again. But if you slow down your \\
$\bm{\varepsilon = 750}$  It's that time of year again when we talk about kitty racing \\
$\bm{\varepsilon = 500}$  We've all been talking about the game, but this is a bit of \\
$\bm{\varepsilon = 250}$  12 years ago today morning morning morning, a 12- \\
Sample rewritten texts for varying privacy budgets, using DP-BART-PR+ for the Amazon dataset.\\

\noindent \textbf{Original}  i want to hear something eclectic \\
$\bm{\varepsilon = 2500}$  The following is a list of interesting things to hear from the eclectic, eclectic, and \\
$\bm{\varepsilon = 1000}$  i want to hear something different from what everyone else has been hearing about this week. \\
$\bm{\varepsilon = 750}$  i want to hear something different about this mod. It's simple, but \\
$\bm{\varepsilon = 500}$  i want to hear something like this. If you want to listen to music \\
$\bm{\varepsilon = 250}$  In the last three year in the last time it seems to have an area of the \\
Sample rewritten texts for varying privacy budgets, using DP-BART-PR+ for the Snips dataset.\\

We provide sample rewritten texts from the DP-BART-PR+ model, comparing the difference in output across $\varepsilon$ values on the Snips and Amazon datasets.
We can see that, for different values of $\varepsilon$, parts of the original input sequence reappear in the rewritten output to varying degrees.
For example, the first five tokens of the original Snips sample reappear in the rewritten texts at $\varepsilon=500,750,1000$.
At the lower $\varepsilon$ value of $250$, while the output is still in part coherent, it is no longer recognizable from the original.
At the lowest $\varepsilon$ values, there is so much noise added to the model that the output primarily consists of `start of sequence' and `end of sequence' tokens, resulting in an overall empty output.
For the Amazon example, most rewritten tokens are different from the input, with some resemblance at $\varepsilon = 500$, but a more coherent and related output primarily at the larger $\varepsilon = 2500$.

Interestingly for these examples, while the rewritten documents are very altered from the original documents throughout, it is enough in the case of DP-BART-PR+ to achieve a relatively good downstream performance, such as an $F_1$ score of $0.65$ for Snips at $\varepsilon = 500$ and $0.82$ for Amazon at $\varepsilon = 250$.
This is more of what we would expect from a text rewriting system, since if the original text is clearly noticeable in the rewritten output, we would strongly suspect a privacy leak.

\subsection{Comparing rewritten texts across models}
\noindent \textbf{Original} The product doesn't work at all. \\
\textbf{ADePT}  has !\ low phone unauthorised and 1 awesome 5th whatsoever pickle my canna kindle just flowed phones signup \\
\textbf{DP-BART-CLV}  """.\ ~@...???)!)..~W ~@.~W???) \\
\textbf{DP-BART-PR}  Technical precisely anticipate work-touch to enhance Resources Resources ARE/and and Science Matters/ \\
\textbf{DP-BART-PR+}  "The product doesn't work at all." That is the sentiment of \\
Sample rewritten texts for each model type, at $\varepsilon=750$ for the Amazon dataset.
\\

We additionally provide sample rewritten texts from each model, at the same $\varepsilon$ value and on the same dataset (Amazon at $\varepsilon = 750$).
Here we can see that the DP-BART-PR+ model output is the most similar to the original document, being rewritten verbatim, followed by some additional output.
The output sequence for DP-BART-PR is less coherent, but still with recognizable sequences for some token pairs, while DP-BART-CLV and ADePT have output that is seemingly random.

\end{document}